\newtheorem{theorem}{Theorem}
\newtheorem{proposition}{Proposition}
\newcommand{\ie}{\emph{i.e.},\xspace}
\newcommand{\eg}{\emph{e.g.},\xspace}
\newcommand{\mmnm}{{CM$^3$}}
\begin{document}

\title{\mmnm{}: Calibrating Multimodal Recommendation}

\author{Xin Zhou}
\affiliation{%
  \institution{Nanyang Technological University}
  \city{Singapore}
  \country{Singapore}}
\email{xin.zhou@ntu.edu.sg}

\author{Yongjie Wang}
\affiliation{%
  \institution{Nanyang Technological University}
  \city{Singapore}
  \country{Singapore}}
\email{yongjie.wang@ntu.edu.sg}

\author{Zhiqi Shen}
\affiliation{%
  \institution{Nanyang Technological University}
  \city{Singapore}
  \country{Singapore}}
\email{ZQShen@ntu.edu.sg}

\begin{abstract}
Alignment and uniformity are fundamental principles within the domain of contrastive learning.
In recommender systems, prior work has established that optimizing the Bayesian Personalized Ranking (BPR) loss contributes to the objectives of alignment and uniformity.
Specifically, alignment aims to draw together the representations of interacting users and items, while uniformity mandates a uniform distribution of user and item embeddings across a unit hypersphere.
This study revisits the alignment and uniformity properties within the context of multimodal recommender systems, revealing a proclivity among extant models to prioritize uniformity to the detriment of alignment. Our hypothesis challenges the conventional assumption of equitable item treatment through a uniformity loss, proposing a more nuanced approach wherein items with similar multimodal attributes converge toward proximal representations within the hyperspheric manifold.
Specifically, we leverage the inherent similarity between items' multimodal data to calibrate their uniformity distribution, thereby inducing a more pronounced repulsive force between dissimilar entities within the embedding space.
A theoretical analysis elucidates the relationship between this calibrated uniformity loss and the conventional uniformity function. Moreover, to enhance the fusion of multimodal features, we introduce a Spherical Bézier method designed to integrate an arbitrary number of modalities while ensuring that the resulting fused features are constrained to the same hyperspherical manifold. Empirical evaluations conducted on five real-world datasets substantiate the superiority of our approach over competing baselines. We also shown that the proposed methods can achieve up to a 5.4\% increase in NDCG@20 performance via the integration of MLLM-extracted features. Source code is available at: \url{https://github.com/enoche/CM3}.
\end{abstract}



\keywords{Multimodal Recommendation, Contrastive Learning, Calibrating}

\maketitle

\section{Introduction}
The advent of multimodal learning has intensified attention on multimodal recommender systems, which leverage heterogeneous data modalities (\eg visual and textual information) associated with items to achieve effective recommendation~\cite{baltruvsaitis2018multimodal, deldjoo2020recommender,zhou2024disentangled}. Within this burgeoning field, contrastive learning has emerged as a promising paradigm for enhancing the learning of user and item representations from multimodal data. 
In fact, the contrastive learning framework is predicated upon two fundamental principles: alignment and uniformity~\cite{wang2020understanding}. In the context of multimodal recommendation, alignment ensures consistency between representations derived from distinct modalities or positive user-item pairs, while uniformity promotes an equitable distribution of user and item representations across a unit hypersphere. 
Prior research~\cite{wang2022towards}, which relies exclusively on user-item interactions, has established that directly optimizing alignment and uniformity can significantly enhance recommendation performance. However, these principles remain under-explored in multimodal recommendation, where the integration of diverse feature modalities necessitates delicate consideration.

\begin{figure}[bpt]
    \centering
    \advance\leftskip-1cm
    \includegraphics[width=0.8\linewidth]{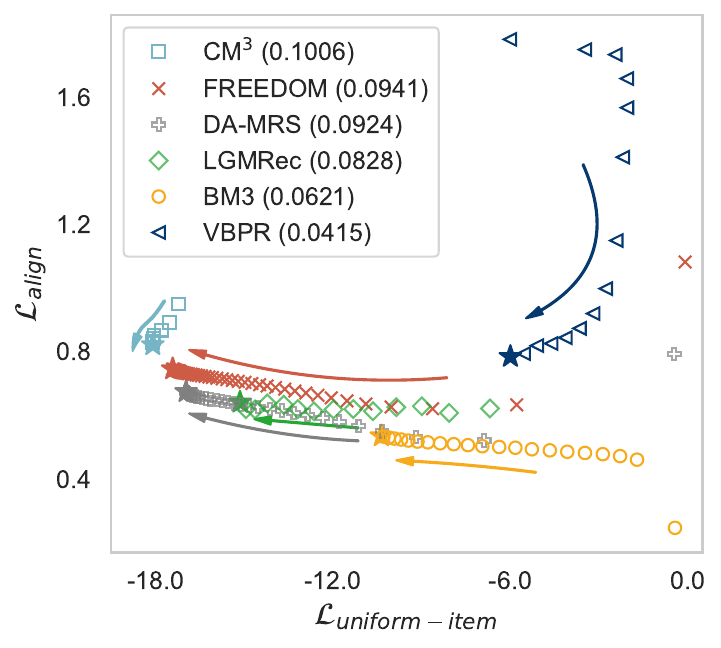}
    \caption{Training dynamics of alignment loss ($l_{\mathrm{align}}$) and item uniformity loss ($l_{\mathrm{uniform-item}}$) for various multimodal models. Optimal validation performance, indicated by stars, is accompanied by corresponding changes in loss values (denoted by colored arrows). Performance is quantified using Recall@20, shown in brackets after each model name.}
    \vspace{-10pt}
    \label{fig:align_uniform}
\end{figure}

\textit{Firstly}, we demonstrate the \textbf{contrary} in optimizing alignment and uniformity in current multimodal recommender models. 
According to Theorem 1 of DirectAU~\cite{wang2022towards}, perfectly aligned and uniform encoders, if they exist, are the global minimizers of the Bayesian Personalized Ranking (BPR) loss~\cite{rendle2012bpr}. This implies that the BPR loss inherently promotes lower alignment between positive user-item pairs and uniformity between user-user and item-item pairs.
However, our empirical analysis reveals that this theoretical optimum is not achieved in multimodal recommendation. We illustrate this by plotting the training evolution of alignment and uniformity metrics for five representative multimodal models with Clothing dataset in Fig.~\ref{fig:align_uniform}. Among these, BM3~\cite{zhou2023bootstrap} utilizes contrastive learning techniques to derive user and item representations, while VBPR~\cite{he2016vbpr} and FREEDOM~\cite{zhou2023tale} employ the BPR loss for model optimization. It is noteworthy that LGMRec~\cite{guo2024lgmrec} and DA-MRS~\cite{xv2024improving} incorporates both contrastive learning and BPR loss for model optimization.
As depicted in Fig.~\ref{fig:align_uniform}, all multimodal models exhibited a distinct bias towards optimizing uniformity, thereby compromising alignment, during the latter stages of training. 
This unexpected finding reveals a divergence from the typical optimization behavior seen in general recommender systems, as documented in~\cite{wang2022towards}.

\textit{Secondly}, we delve into the underlying mechanisms that precipitate the observed behavior. Given two pairs of interactions for $u$ as $(u, i)$ and $(u, j)$, based on the research of~\cite{wang2020understanding}, $l_{\mathrm{align}}$ minimizes both $\mathbb{E}_{(u, i) \sim p_{\text {pos }}} \left[\|f(u)-f(i)\|_2^2\right]$ and $\mathbb{E}_{(u, j) \sim p_{\text {pos }}}\left[\|f(u)-f(j)\|_2^2\right]$, while $l_{\mathrm{uniform}}$ minimizes $\mathbb{E}_{(i,j) \sim p_{\text {item }}}\left[e^{-t\|f(i)-f(j)\|_2^2}\right]$.
If $u$ is perfectly aligned with both items, $\mathbb{E}_{(i, j) \sim p_{\text {item }}}\left[\|f(i)-f(j)\|_2^2\right]$ tends to be minimized. However, this conflicts with the objective of $l_{\mathrm{uniform}}$, which aims to maximize $\|f(i)-f(j)\|_2^2$. 
Consequently, models face challenges in balancing the optimization of these competing objectives.
Furthermore, the incorporation of multimodal information further deteriorates item uniformity optimization, as items with similar multimodal features cluster more tightly in the embedding space than items with randomly generated multimodal data,
as can be evidenced by Table~\ref{tab:mm_uni_random} in the Appendix. 

To address this issue, we propose a \underline{C}alibrated \underline{M}ulti\underline{M}odal \underline{M}odel (\mmnm{}) that enhances recommendation efficacy by modulating item uniformity via the utilization of multimodal information. Specifically, we initially compute a similarity score based on the multimodal features of items. This score is subsequently integrated into the uniformity loss, aiming to repel dissimilar items while maintaining proximity between similar items. We further provide a theoretical analysis demonstrating the pivotal role of the similarity score in determining the behavior of the calibrated uniformity loss with respect to items.
To quantify similarity by leveraging the intrinsic information of each modality, we propose a Spherical Bézier fusion method that integrates multimodal data into a unified vector. The item-item similarity score is then derived from this composite vector. This approach ensures that the resulting vectors retain hyperspherical properties, as each constituent modality vector already lies on the hypersphere.
Our key contributions are as follows:
\begin{itemize} [leftmargin=*]
\item We elucidate the inherent dilemma faced by conventional multimodal recommendation models in simultaneously optimizing the alignment of positive interactions and maintaining uniformity between user-user and item-item relationships.
\item We introduce a novel calibrated recommendation model, \mmnm{}, which refines inter-item relations within the uniformity loss function by utilizing multimodal features. In \mmnm{}, we design a spherical Bézier fusion method to blend data from all modalities, preserving semantics by integrating multimodal features along the shortest path on a spherical surface.
\item We conduct comprehensive empirical evaluations on real-world datasets, demonstrating that \mmnm{} significantly outperforms state-of-the-art multimodal recommender systems. To gain a nuanced understanding of \mmnm{}'s efficacy, we also perform extensive ablation studies under various evaluation configurations.
\end{itemize} 

\section{Related Work}
\subsection{Multimodal Recommendation}
Multimodal recommendation leverages multimodal information (\eg images and textual descriptions) of items to enhance the recommendation performance within the collaborative filtering paradigm~\cite{deldjoo2020recommender,zhou2023comprehensive,zhang2024multiH,liuqijiong2024multimodal,zhu2024multimodal,liu2024multimodal}. Early studies~\cite{he2016vbpr, liu2017deepstyle, chen2019personalized, liu2019user} adopted deep learning techniques to extract visual and/or textual features of items, along with the original item embeddings, to model user-item interactions within the BPR framework~\cite{rendle2012bpr}. With the help of multimodal information, these methods could better capture user preferences. Graph Neural Networks (GNNs), which capture high-order structures in user-item interactions, have successfully enhanced user and item representations by aggregating multi-hop neighborhood information, as demonstrated in later studies~\cite{wei2019mmgcn, wei2020graph, wang2021dualgnn,zhou2023enhancing,LayerGCN}. LATTICE~\cite{zhang2021mining} highlights that incorporating item-item relationships can enhance item representations. To achieve this, it first learns item-item graphs for each modality and then fuses these graphs into a final item-item graph. FREEDOM~\cite{zhou2023tale} argues that item-item graph learning is trivial and introduces computational overhead in LATTICE~\cite{zhang2021mining}. To address this, it freezes the item-item graphs and further denoises user-item graphs for more efficient and effective recommendations. LGMRec~\cite{guo2024lgmrec} jointly learns local and global representations of users and items to model user-item interactions at multiple granularities. 
PGL~\cite{yu2025mind} effectively extracts and leverages principal local structural features from user-item interaction graphs to enhance graph learning, delivering superior recommendation performance.
SMORE~\cite{ong2025spectrum} fuses multi-modal features in the spectral domain, suppresses modality-specific noise with an adaptive filter.
Another line of research~\cite{tao2022self, zhou2023bootstrap, wei2023multi, yi2022multi, wang2024multi} adopts a self-supervised learning framework with contrastive learning by augmenting multi-view data to address the data scarcity problem. 

Our study distinguishes itself from existing methods in the field of multimodal recommendation by directly exploiting alignment and uniformity losses. In contrast, previous works typically employ contrastive learning loss as a complementary objective, often combining it with other loss functions. This fundamental difference in methodology allows our model to more explicitly optimize for both alignment and uniformity in the multimodal representation space, potentially leading to more robust and effective recommendations.
We anticipate that our study will stimulate further research in related domains~\cite{zhou2025crowd}, including sequential recommendation~\cite{WhitenRec,DWSRec} and sustainable recommendation systems~\cite{zhou2024advancing}.

\subsection{Contrastive Learning} 
Contrastive learning (CL) has demonstrated remarkable success across various domains~\cite{chen2020simple,gao2021simcse,li2019graph,radford2021learning,GDCL,MP4SR,zhang2024multiC,qu2025enhancing,qiu2024paircfr}. The objective of CL is to map semantically similar data to closely aligned embeddings while separating semantically dissimilar data into distinct regions of the embedding space~\cite{chopra2005learning, schroff2015facenet, oord2018representation}. A common approach for stabilizing CL training is to normalize latent representations onto the unit hypersphere. Empirical studies have shown that normalized representations outperform unnormalized counterparts, such as those in Euclidean space~\cite{schroff2015facenet, wang2017normface}. As stated by~\cite{wang2020understanding}, minimizing contrastive loss on normalized space is equivalent to minimizing two objectives: 1) alignment, where samples from positive pairs should have similar features; and 2) uniformity, where feature vectors of all data points should be roughly uniformly distributed on the unit hypersphere. 

Following~\cite{wang2020understanding}, recent research~\cite{wang2022towards, gao2021simcse} directly optimizes the alignment and uniformity terms to avoid the need for hard example sampling. 
However, we observe that this finding does not hold in multimodal recommendation. The tie is broken by simultaneously optimizing alignment between interacted users and items, and uniformity within item-item and user-user pairs.  In this work, we propose a novel calibrated uniformity loss for items, specifically designed for multimodal recommendation scenarios, to address the optimization conflicts between alignment and uniformity terms.

\begin{figure*}[hbtp]
    \centering
    \includegraphics[width=\linewidth,trim={90 10 10 10},clip]{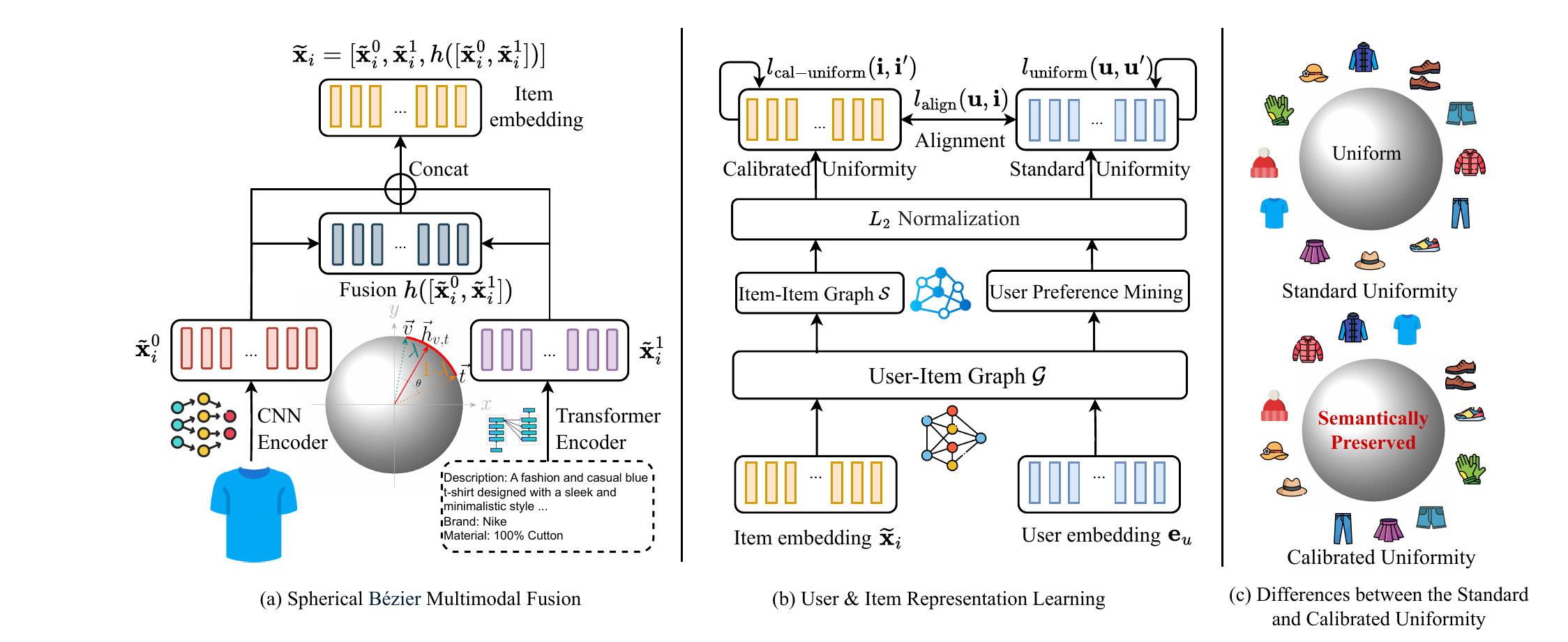}
    \caption{Overview of our proposed \mmnm. (a) We encode the product images and textual descriptions using pre-trained models, then mix multimodal features with a Spherical Bézier Fusion. (b) Initial item and user embeddings are enhanced through user-item and item-item graphs using GNNs. Alignment, as well as standard and calibrated uniformity losses, are used to optimize the distributions of user and item representations on a unit hypersphere. (c) A toy example demonstrates the differences between standard uniformity and our calibrated uniformity losses over items.}
    \label{fig:framework}
\end{figure*}

\section{Calibrated Multimodal Recommendation}
\label{sec:model}
\subsection{Overview of \mmnm{}}
The crux of multimodal models lies in their capacity to learn informative user and item representations for recommendation, leveraging rich multimodal features. To this end, \mmnm{} implements a bifurcated strategy to derive item representations: i) the augmentation of multimodal features through a Spherical Bézier Multimodal Fusion technique, which facilitates the integration and transformation of diverse data modalities in a hyperspherical manifold space; and ii) the refinement of low-dimensional item embeddings via the application of a meticulously calibrated uniformity loss function. This novel uniformity loss enables the model to distill the multifaceted nature of multimodal information into a refined and discriminative representational framework, thereby enhancing the model's capacity to capture nuanced item characteristics and inter-item relationships.

We elucidate the constituent components of \mmnm{} in the subsequent subsections, accentuating the innovative design elements while concisely referencing the foundational mechanisms upon which \mmnm{} is constructed, such as Graph Convolutional Networks (GCN). Fig.~\ref{fig:framework} presents an overview of \mmnm{}.
\subsection{Notations}
Consider a dataset $\mathcal{D}$ defined by the tuple $\mathcal{D} = \{\mathcal{G}, \mathbf{X}^0, \cdots, \mathbf{X}^{|\mathcal{M}|-1}\}$, where $\mathcal{G} = (\mathcal{V}, \mathcal{E})$ denotes the interaction bipartite graph.
The set $\mathcal{M}$ encompasses all available modalities pertinent to the items under consideration. Within this framework, $\mathcal{E}$ and $\mathcal{V}$ denote the edge set and node set of the graph, respectively, encapsulating the interactions between users and items.
More formally, an edge $\mathcal{E}_{ui}=1$ within $\mathcal{G}$ signifies the existence of an interaction between a user $u$ and an item $i$. The node set $\mathcal{V}$ is defined as the union of user and item sets, such that $\mathcal{V}=\mathcal{U} \cup \mathcal{I}$, where $u \in \mathcal{U}$ represents a user and $i \in \mathcal{I}$ denotes an item.
For each modality $m \in \mathcal{M}$, we define a feature matrix $\mathbf{X}^m \in \mathbb{R}^{|\mathcal{I}| \times d_m}$, where $|\mathcal{I}|$ represents the cardinality of the item set and $d_m$ signifies the original dimensionality of the feature space for modality $m$.

Given the dataset $\mathcal{D}$, the objective of a multimodal recommender system is to generate a ranked list of items for each user $u$, predicated on a preference score function. This function, denoted as $\hat{y}_{ui}$, quantifies the predicted affinity between user $u$ and item $i$, and is formally defined as:
$\hat{y}_{ui} = f_{\Theta}(u, i, \mathbf{x}^0_i, \cdots, \mathbf{x}^{|\mathcal{M}|-1}_i)$.
The model $f_{\Theta}(\cdot)$ is parameterized by $\Theta$, a set of trainable parameters.

\subsection{Spherical Bézier Multimodal Fusion}
\subsubsection{Multimodal Feature Projection}
The multimodal features extracted from pretrained models are often tangentially related to the downstream task and typically characterized by high dimensionality. To address these challenges, we employ Deep Neural Networks (DNNs) to project each individual modality feature into its corresponding low-dimensional space. This dimensionality reduction not only mitigates computational complexity but also enhances the relevance of the features to the task at hand.
Specifically, given a unimodal feature matrix of items, denoted as $\mathbf{X}^m \in \mathbb{R}^{|\mathcal{I}| \times d_m}$, we derive the latent unimodal representation through the following equation:
\begin{equation}
    \widetilde{\mathbf{X}}^m = \sigma({\mathbf{X}^m \mathbf{W}_1^m  + \mathbf{b}_1^m}) \mathbf{W}^m_2,
    \label{eq:mm}
\end{equation}
where $\sigma(\cdot)$ denotes an activation function, such as the `$\mathrm{Leaky\_relu}$' function. $\mathbf{W}_1^m \in \mathbb{R}^{d_m \times d_1}$, $\mathbf{W}_2^m \in \mathbb{R}^{d_1 \times d}$, and $\mathbf{b}_1^m \in \mathbb{R}^{d_1}$ represent the trainable weight matrices and bias vector, respectively. Here, $d_1$ and $d$ indicate the vector dimensions.

\subsubsection{Infinite Multimodal Fusion}
Given the unimodal representations derived from Equation~\eqref{eq:mm} using distinct pre-trained encoders, a multimodality gap may arise. To address this, we propose an advanced interpolation method based on Mixup to effectively fuse the representations. Mixup~\cite{zhang2018mixup, verma2019manifold, oh2024geodesic} is a technique that linearly interpolates pairs of data points, creating synthetic samples to enrich the training set. Empirical studies have consistently demonstrated its effectiveness in improving the generalization and robustness of neural networks.
While traditional Mixup typically leverages both feature vectors and labels from two samples for interpolation. In this work, we extend this approach to enable infinite multimodal fusion. First, in the absence of labels, we interpolate multimodal features corresponding to the same item (\textit{item as label}) for fusion. Second, we employ De Casteljau's algorithm to iteratively combine an infinite number of multimodal features. This method ensures that the interpolated vector traverses a Bézier curve defined by the multimodal vectors while remaining constrained within the hyperspherical manifold.
Given a set of unimodal features $[\tilde{\mathbf{x}}^m_i]$, where $m \in \mathcal{M}$, for an item $i$, the mixed feature can be computed as:
\begin{equation}
		h([\tilde{\mathbf{x}}^m_i]) = \underbrace{f(\tilde{\mathbf{x}}^{|\mathcal{M}|-1}_i, f(\cdots, f(\tilde{\mathbf{x}}^2_i, f(\tilde{\mathbf{x}}^1_i, \tilde{\mathbf{x}}^0_i))\cdots ))}_{|\mathcal{M}|-1},
		\label{eq:xmixup}
\end{equation}
where $f(\vec{a}, \vec{b})$  denotes the spherical interpolation function that is defined as:
\begin{equation}
		f(\vec{a}, \vec{b}) = \frac{\mathrm{sin} (\lambda \theta)}{\mathrm{sin}(\theta)} \vec{a} + \frac{\mathrm{sin} ((1-\lambda) \theta)}{\mathrm{sin}(\theta)} \vec{b}.
		\label{eq:geomix}
\end{equation}

In this equation, $\theta = \cos^{-1}(\vec{a}, \vec{b})$ represents the angle between vectors $\vec{a}$ and $\vec{b}$, and $\lambda$ is sampled from a Beta distribution with hyperparameter $\alpha$, such that $\lambda \sim \text{Beta}(\alpha, \alpha)$.
\begin{proposition}
Given that all vectors in $[\tilde{\mathbf{x}}_i^m]$ lie on the hypersphere, the mixed feature defined by Equation~\eqref{eq:xmixup} also lies on the hypersphere.
 \label{th:xmix_hyper}
\end{proposition}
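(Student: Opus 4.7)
The plan is to proceed by induction on the number of modalities $|\mathcal{M}|$, exploiting the recursive structure of Equation~\eqref{eq:xmixup}: since $h$ is built from repeated nesting of the binary fusion $f$, it suffices to prove that a single application of $f$ maps two unit vectors to a unit vector, and then iterate.

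For the base case, I would compute $\|f(\vec{a}, \vec{b})\|_2^2$ directly by expanding the square and using the fact that $\vec{a}\cdot\vec{b} = \cos\theta$ (since $\vec{a},\vec{b}$ are unit vectors, by definition of $\theta$). Writing $A = \lambda\theta$ and $B = (1-\lambda)\theta$ so that $A+B = \theta$, the resulting expression has numerator $\sin^2 A + \sin^2 B + 2\sin A \sin B \cos(A+B)$ over a denominator of $\sin^2\theta$. The key trigonometric manipulation is to substitute $\cos(A+B) = \cos A \cos B - \sin A \sin B$ and reorganize the remaining terms into $(\sin A \cos B + \cos A \sin B)^2 = \sin^2(A+B) = \sin^2\theta$, which cancels against the denominator and yields unit norm. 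This is precisely the classical slerp identity.

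The inductive step is then automatic: assume every nested application of $f$ to $k$ unit vectors lies on the hypersphere. The outer application in Equation~\eqref{eq:xmixup} pairs $\tilde{\mathbf{x}}_i^{k}$ with the fused output of the first $k$ vectors, both of which are unit vectors by hypothesis, so the base case delivers a unit vector and the induction closes.

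The main obstacle — in fact the only nontrivial content — is the trigonometric simplification in the base case; once that identity is in hand, the induction is purely mechanical. A minor caveat worth flagging is the degenerate case $\sin\theta \to 0$ (nearly collinear inputs), which I would resolve by interpreting $f$ as its limit: a standard linear interpolation between coincident endpoints, which trivially stays on the sphere.
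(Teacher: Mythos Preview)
Your proposal is correct and follows essentially the same approach as the paper: both prove the binary case by directly expanding $\|f(\vec a,\vec b)\|^2$ using $\vec a\cdot\vec b=\cos\theta$ and simplifying via trigonometric identities to obtain $\sin^2\theta/\sin^2\theta=1$, then extend to the full nested fusion by iteration. Your route through the angle-addition formula (recognizing $(\sin A\cos B+\cos A\sin B)^2$) is slightly cleaner than the paper's half-angle and product-to-sum manipulations, and you make the induction and the $\sin\theta\to 0$ caveat explicit where the paper leaves both implicit, but these are presentational differences only.
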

\begin{proof}
The proof of this proposition is straightforward and is provided in the Appendix~\ref{append:a}.
\end{proof}

\subsection{Enhancing User and Item Representations via Graph Learning}
To adeptly capture higher-order interactions between users and items, as well as the intricate semantic relationships among items, we employ the widely acknowledged graph learning paradigm~\cite{zhang2021mining, zhou2023tale}. This methodology facilitates the derivation of user and item representations from both user-item and item-item graphs.
\subsubsection{Graph Learning on User-Item Graph}
We first concatenate the $|\mathcal{M}| + 1$ latent features into a single vector to signify item representation:
\begin{equation}
    \widetilde{\mathbf{X}} = \mathrm{Concat}(\widetilde{\mathbf{X}}^{0}, \cdots, \widetilde{\mathbf{X}}^{|\mathcal{M}|-1}, h([\widetilde{\mathbf{X}}^m])),
    \label{eq:itemrep}
\end{equation}
where $h([\widetilde{\mathbf{X}}^m])$ is the mixed features via Equation~\eqref{eq:xmixup} at matrix view.
The dimension of $\widetilde{\mathbf{X}}$ is $\mathbb{R}^{|\mathcal{I}| \times \ell}$, where $\ell =|\mathcal{M}|d+d$.

To accommodate user preference and attend to both the modality-specific latent feature and the multi-modality shared feature, we formulate a user ID embedding matrix, represented as $\mathbf{E} \in \mathbb{R}^{|\mathcal{U}| \times \ell}$.
For the propagation of information within the convolutional network, we employ LightGCN~\cite{he2020lightgcn}. Specifically, the representations of user $u$ and item $i$ at the $(l+1)$-th graph convolution layer of $\mathcal{G}$ are derived as follows:
\begin{equation}
	\begin{split}
        \mathbf{e}_{u}^{(l+1)} &= \sum_{i\in \mathcal{N}_{u} } \frac{1}{\sqrt{\left | \mathcal{N}_{u}  \right | }\sqrt{\left | \mathcal{N}_{i}  \right | } } \widetilde{\mathbf{x}}_{i}^{(l)};\\ 
        \widetilde{\mathbf{x}}_{i}^{(l+1)} &= \sum_{u\in \mathcal{N}_{i} } \frac{1}{\sqrt{\left | \mathcal{N}_{u}  \right | }\sqrt{\left | \mathcal{N}_{i}  \right | } } \mathbf{e}_{u}^{(l)},
        \end{split}
\end{equation}
where $\mathcal{N}_u$ and $\mathcal{N}_i$ denote the set of first hop neighbors of $u$ and $i$ in $\mathcal{G}$, respectively.
Employing $L_{ui}$ layers of convolutional operations, we extract all representations from the hidden layers to formulate the final representations for both users and items:
\begin{equation}
	\begin{split}
		\mathbf{E} &= \mathrm{R{\scriptsize EADOUT}}(\mathbf{E}^0, \mathbf{E}^1, \cdots ,\mathbf{E}^{L_{ui}}); \\
		\widetilde{\mathbf{X}} &= \mathrm{R{\scriptsize EADOUT}}(\widetilde{\mathbf{X}}^0, \widetilde{\mathbf{X}}^1, \cdots, \widetilde{\mathbf{X}}^{L_{ui}}),
		\label{eq:lgn_layer_update}
	\end{split}
\end{equation}
where the $\mathrm{R{\scriptsize EADOUT}}$ function can be any differentiable function. We use the sum function to derive the representations.

\subsubsection{User Preference Mining}
To distinguish user preferences among multimodal features, we partition user embeddings into $|\mathcal{M}|+1$ segments, each corresponding to the modality features as defined in Equation~\eqref{eq:itemrep}. A learnable weight matrix $\mathbf{W}_3 \in \mathbb{R}^{|\mathcal{U}|\times (|\mathcal{M}|+1) \times 1}$ is initialized and employed to compute the final user representation. Following the reshaping of both $\mathbf{E}$ to the dimensions of $\mathbb{R}^{|\mathcal{U}|\times (|\mathcal{M}|+1) \times d}$, we calculate:
\begin{equation}
    \widehat{\mathbf{E}} = \mathbf{W}_3 \mathbf{E}.
    \label{eq:userrep_f}
\end{equation}
Subsequently, having obtained the differentiated user preferences, we reshape the representation back to its original dimensions as:
$\widehat{\mathbf{E}}=\widehat{\mathbf{E}}.\mathrm{view}(|\mathcal{U}|, \ell)$.

\subsubsection{Graph Learning on Item-Item Graph}
To further elucidate the high-order relationships between items, we adhere to the methodology outlined in existing work~\cite{zhou2023tale} to construct an item-item graph $\mathbf{S}$ based on multimodal features. Subsequently, we perform graph convolutions on the items using the item-item graph $\mathbf{S}$ to derive the final item representations.
The detailed procedures involved in this process are not elaborated upon here, as they are analogous to those used in the user-item graph.
With the last layer's representation $\widetilde{\mathbf{X}}^{L_{ii}}$ ($L_{ii}$ is the number of layers), we establish a residual connection with the initial item representation ($\widetilde{\mathbf{X}}^{0}$) to obtain the item final representation:
\begin{equation}
	\widehat{\mathbf{X}} = \widetilde{\mathbf{X}}^{L_{ii}} + \widetilde{\mathbf{X}}^{0}.
	\label{eq:ii_emb}
\end{equation}

\subsection{Alignment and Calibrated Uniformity}
Consider a positive pair $(u, i)$ of user and item with corresponding representations $\mathbf{u}$ and $\mathbf{i}$, respectively. The alignment and uniformity losses are defined as follows:
\begin{equation}
    \begin{split}
        l_{\rm align}(u,i) =& \mathop{\mathbb{E}}_{(\mathbf{u}, \mathbf{i})\sim p_{\rm pos}}||\mathbf{u} - \mathbf{i}||^2 ;\\
        l_{\rm uniform}(i, i') =
        &~\log\mathop{\mathbb{E}}_{\mathbf{i}, \mathbf{i}'\sim p_{\rm item}}e^{-t ||\mathbf{i} - \mathbf{i}'||^2},
    \end{split}
    \label{eq:audef}
\end{equation}
where $t > 0$ is a temperature parameter, $p_{\text{pos}}$, $p_{\text{user}}$, and $p_{\text{item}}$ denote the distributions of positive user-item pairs, users, and items, respectively. $\mathbf{u}'$ and $\mathbf{i}'$ represent the embeddings of user $u'$ and item $i'$.
The alignment loss serves to bring positive pairs $(u, i)$ closer in the embedding space, while the uniformity loss repels users from other users and items from other items, promoting a uniform distribution of representations.

We propose that the relationships between items should be differentiated. Consequently, we modify the uniformity loss for items as follows:
\begin{equation}
        l_{\rm cal-uniform}(i,i') = \log\mathop{\mathbb{E}}_{\mathbf{i}, \mathbf{i}' \sim p_{\rm item}}e^{-t \left(||\mathbf{i} - \mathbf{i}'||^2 - 2 + 2{s}(\bar{\mathbf{i}}, \bar{\mathbf{i}}') \right)},
    \label{eq:softu}
\end{equation}
where $s(\cdot)$ is a function that computes a clamped similarity score between two vectors, which can be pre-calculated before loss computation. $\bar{\mathbf{i}}$ represents any level of representation for item $i$. In this context, we utilize the mixed features of $i$, defined as $\bar{\mathbf{i}} = h([\tilde{\mathbf{x}}^m_i])$.

\begin{theorem}[Calibrated Uniformity Amplification]
Let $\mathcal{I}$ be the set of all items, and let $\varphi=s(\bar{\mathbf{i}}, \bar{\mathbf{i}}')$ denote the similarity between a specific pair of items $i, i' \in \mathcal{I}$. Consider the calibrated uniformity loss function $l_{\mathrm{cal-uniform}}$ defined above, the following statement holds:

The calibrated uniformity loss $l_{\mathrm{cal-uniform}}$ amplifies the repulsion between items $i$ and $i'$ by a factor of $e^{2t(1-\varphi)}$ relative to the standard uniformity loss.

\label{th:softuni1}
\end{theorem}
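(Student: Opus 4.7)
My plan is to treat the theorem as essentially an algebraic identity on the per-pair integrand of the inner expectation, and then translate the resulting multiplicative scaling into a gradient-level statement about repulsion. First I would fix a specific pair $(i,i')\in\mathcal{I}\times\mathcal{I}$ and isolate the contribution it makes to the inner expectation in each loss. Since $\varphi=s(\bar{\mathbf{i}},\bar{\mathbf{i}}')$ depends only on the fixed (precomputed) mixed features and not on the trainable embeddings $\mathbf{i},\mathbf{i}'$, the exponent $-t(\|\mathbf{i}-\mathbf{i}'\|^2-2+2\varphi)$ splits additively and the calibrated contribution factors cleanly as $e^{-t\|\mathbf{i}-\mathbf{i}'\|^2}\cdot e^{2t(1-\varphi)}$. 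The amplification factor $e^{2t(1-\varphi)}$ falls out immediately.

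Next I would connect this weight scaling to repulsion by differentiating the pairwise integrand with respect to $\mathbf{i}$. The standard repulsive force on the pair is $\nabla_{\mathbf{i}}\,e^{-t\|\mathbf{i}-\mathbf{i}'\|^2}=-2t(\mathbf{i}-\mathbf{i}')\,e^{-t\|\mathbf{i}-\mathbf{i}'\|^2}$, and under the calibrated objective it becomes $-2t(\mathbf{i}-\mathbf{i}')\,e^{-t\|\mathbf{i}-\mathbf{i}'\|^2}\cdot e^{2t(1-\varphi)}$ because $\varphi$ contributes nothing to the gradient. Passing through the outer logarithm divides every pairwise gradient by the same positive scalar $\mathbb{E}[\,\cdots\,]$, so the relative per-pair repulsion strength is scaled by exactly the same factor $e^{2t(1-\varphi)}$ claimed in the theorem.

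I would close with a short sanity check using the clamp on $s(\cdot)$: restricting $\varphi\le 1$ yields $e^{2t(1-\varphi)}\ge 1$, with equality iff $\varphi=1$, so semantically identical items experience no extra push while strictly dissimilar items feel a strictly larger repulsive gradient, recovering the intuition that motivates Equation~\eqref{eq:softu}.

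The main obstacle is notational rather than technical. I need to fix an operational definition of "repulsion" as the per-pair gradient contribution inside the log-mean-exp loss, and to make explicit that $\bar{\mathbf{i}},\bar{\mathbf{i}}'$ (and hence $\varphi$) are detached from the trainable representations so that they act as constants under differentiation. Once these conventions are pinned down, the proof collapses to the one-line exponent rearrangement above.
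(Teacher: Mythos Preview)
Your approach is correct and mirrors the paper's own proof: the paper also just computes the ratio of the per-pair integrands to obtain $e^{2t(1-\varphi)}$, then invokes the clamp $\varphi\in[0,1]$ to conclude the factor is at least $1$ with equality iff $\varphi=1$. Your gradient-level justification of ``repulsion'' and the explicit observation that $\varphi$ is detached go slightly beyond what the paper does---it simply identifies the integrand ratio with the amplification of repulsion without differentiating---but the core exponent rearrangement is identical.
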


\begin{proof}
Note that for $\mathbf{i}, \mathbf{i}' \in \mathcal{S}^d$, where $\mathcal{S}^d$ is a unit hypersphere, we have: $||\mathbf{i} - \mathbf{i}'||^2 = 2 - 2 \cdot \mathbf{i}^\top \mathbf{i}'$.

Relation between $l_{\mathrm{cal-uniform}}$ and $l_{\mathrm{uniform}}$:
\begin{align}
    \frac{e^{-t\left(||\mathbf{i} - \mathbf{i}'||^2 - 2 + 2 \varphi \right)}}{e^{-t\left(||\mathbf{i} - \mathbf{i}'||^2 \right)}} &= \frac{e^{-t\left(2 - 2 \cdot \mathbf{i}^\top \mathbf{i}' - 2 + 2 \varphi \right)}}{e^{-t(2 - 2 \cdot \mathbf{i}^\top \mathbf{i}' )}} = e^{2t(1-\varphi)}
\end{align}

Given the clamped similarity score between items is bounded within the interval $[0, 1]$, the calibrated uniformity loss $l_{\mathrm{cal-uniform}}$ degenerates to the standard uniformity loss $l_{\mathrm{uniform}}$ iff $\varphi=1$. Conversely, for $\varphi \neq 1$, $l_{\mathrm{cal-uniform}}$ imposes a more stringent repulsion ($\because$ $e^{1-\varphi} > 1$) between dissimilar items compared to the standard uniformity loss. Consequently, this mechanism promotes items that are similar to themselves to be positioned closer together on the hypersphere.
\end{proof}

\subsection{Model Optimization and Recommendation}
For model optimization, we adopt the alignment for positive pairs and the uniformity loss for users with representation of $\widehat{\mathbf{E}}$, but with the calibrated uniformity loss on items based on representation of $\widehat{\mathbf{X}}$. The final loss:
\begin{equation}
    \mathcal{L} = l_{\rm align}(u,i) + \gamma \left(l_{\rm uniform}(u, u') + l_{\rm cal-uniform}(i, i')\right).
    \label{eq:auopt}
\end{equation}

To generate item recommendations for a user, we calculate the score for a possible interaction between $u$ and $i$ as:
\begin{equation}
	\hat{y}_{ui} = \widehat{\mathbf{e}}_u^\top \widehat{\mathbf{x}}_i.
\end{equation}
A high score suggests that the user prefers the item. Based on these scores, we select the top-$k$ items as recommendations for user $u$.

\subsection{Computational Complexity Analysis}
The computational complexity associated with the alignment and uniformity computations is equivalent for both DirectAU and \mmnm{}, with the exception of similarity score calculations ($\mathcal{O} \bigl( \sum_m^\mathcal{M} |\mathcal{I}| (d_m d_1 + d_1 d)+d^2) \bigl)$) and graph learning of item-item Graph ($\mathcal{O} (L_{ii} |\mathcal{I}|)$). 
Given that the additional computational cost does not substantially increase runtime relative to the baseline DirectAU, we can infer that \mmnm{}'s computational complexity is of the same order as DirectAU.

\section{Experiment Settings}
\subsection{Datasets}
\label{sec:datasets}
Following existing research \cite{he2016vbpr, zhang2021mining, zhou2023bootstrap, zhou2023tale}, we conduct experiments on the Amazon review dataset, which contains both product descriptions and multi-view images. 
The multimodal information inherent in these datasets provides an ideal context for the rigorous evaluation of multimodal recommendation algorithms. Our experimental design incorporates four distinct category-specific datasets: Baby, Sports, Clothing, and Electronics. To ensure data quality and relevance, we applied a 5-core filtering process to both item and user data, effectively removing entries with insufficient interactions. 
To further investigate the generalization capabilities of our model, we employ the MicroLens dataset~\cite{ni2023content}, which comprises data collected from a short-video platform.
The key statistical characteristics of these refined datasets are summarized in Table~\ref{tab:datasets}, offering a quantitative overview of the data used in our experimental procedures.
For the utilization of multimodal information from Amazon datasets, we adhered to established preprocessing protocols as described in~\cite{zhou2023tale, zhou2025learning}.

\begin{table}[bpt]
	\centering	
	\def\arraystretch{0.9}	
	\caption{Statistics of the experimental datasets.}
	\begin{tabular}{l r r r r}
		\toprule
		Dataset & \# Users & \# Items & \# Interactions & Sparsity \\
		\midrule
		Baby & 19,445 & 7,050 & 160,792 & 99.88\% \\
		Sports & 35,598 & 18,357 & 296,337 & 99.95\%\\
		Clothing & 39,387 & 23,033 & 278,677 & 99.97\%\\
		Electronics & 192,403 & 63,001 & 1,689,188 & 99.99\%\\
            MicroLens & 98,129 & 17,228 & 705,174 & 99.96\%\\
		\bottomrule
	\end{tabular}
	\vspace{-16pt}
	\label{tab:datasets}
\end{table}

\subsection{Baselines}
To demonstrate the efficacy of our proposed method, we conduct a comprehensive comparison against the following widely-adopted baselines in general CF models (\ie \textbf{MF}~\cite{rendle2012bpr}, \textbf{LightGCN}~\cite{he2020lightgcn}, \textbf{SelfCF}~\cite{SelfCF23}, \textbf{DirectAU}~\cite{wang2022towards}) and multimodal recommendation (\ie \textbf{VBPR}~\cite{he2016vbpr}, \textbf{MMGCN}~\cite{wei2019mmgcn}, \textbf{GRCN}~\cite{wei2020graph}, \textbf{LATTICE}~\cite{zhang2021mining}, \textbf{SLMRec}~\cite{tao2022self}, \textbf{BM3}~\cite{zhou2023bootstrap}, \textbf{FREEDOM}~\cite{zhou2023tale}, \textbf{LGMRec}~\cite{guo2024lgmrec}, \textbf{DA-MRS}~\cite{xv2024improving}, \textbf{MIG-GT}~\cite{hu2025modality}).
We briefly summarize their key points as follows:
\textbf{MF}~\cite{rendle2012bpr} utilizes BPR loss to enhance latent representations of users and items within a matrix factorization framework. \textbf{LightGCN}~\cite{he2020lightgcn} incorporates a simplified GCNs to derive item and user representations through neighbor information aggregation and propagation.
\textbf{SelfCF}~\cite{SelfCF23} employs three contrastive view perturbations within a self-supervised learning paradigm to generate latent representations of items and users. The ``embedding dropout'' method from SelfCF is adopted here due to its reported superior performance.
\textbf{DirectAU}~\cite{wang2022towards} establishes a direct link between standard BPR loss and the minimization of alignment and uniformity, proposing a simple yet effective approach to optimize these properties for enhanced recommendation performance. \textbf{VBPR}~\cite{he2016vbpr} extracts visual representations using pre-trained CNNs and concatenates these with item embeddings to model user preferences. \textbf{MMGCN}~\cite{wei2019mmgcn} leverages GCNs on modality-specific interaction graphs to derive user preferences in recommendation tasks. \textbf{GRCN}~\cite{wei2020graph} employs user preference and item content affinity to refine the user interaction graph, aiming to mitigate false-positive interactions and prevent noise propagation along edges. \textbf{LATTICE}~\cite{zhang2021mining} models item-item relationships across feature modalities, fusing them to construct a semantic item-item graph. GCNs are applied to both the fused item-item and user-item graphs for more effective embedding learning. 
\textbf{SLMRec}~\cite{tao2022self} advances multimedia recommendation by using self-supervised learning to capture richer user and item relationships, leading to more accurate recommendation performance.
\textbf{BM3}~\cite{zhou2023bootstrap} augments latent representations of items and users through a dropout strategy, introducing a novel self-supervised learning approach to derive high-quality user and item representations. \textbf{FREEDOM}~\cite{zhou2023tale} addresses the limitations of LATTICE by proposing to freeze the item-item graph and further denoising the user-item graphs, enhancing computational efficiency and representation quality. \textbf{LGMRec}~\cite{guo2024lgmrec} simultaneously learns local and global user interests for effectively recommendation. The local graph captures collaborative and multimodal embeddings, while the global graph represents multiple user group interests, addressing sparsity issues in recommendations.
\textbf{DA-MRS}~\cite{xv2024improving} introduces a denoising and alignment framework designed to mitigate noise within multimodal content and user feedback, while also facilitating their alignment through fine-grained guidance.
\textbf{MIG-GT}~\cite{hu2025modality} aims to integrate information from various data modalities using graph neural networks, enhanced by global transformers to capture broader dependencies and improve recommendation accuracy.

\begin{table*}[bpt]
    \centering
    \def\arraystretch{0.86}	
    \setlength{\tabcolsep}{7.5pt}
    \caption{Performance comparison of different recommendation methods in terms of Recall@20 and NDCG@20. The best results are indicated in \textbf{bold} text, and the second-best results are \underline{underlined}. `*' denotes that the improvements (\textit{Imp.}) are statistically significant compared of the best baseline in a paired $t$-test with $p<0.05$. }
    \begin{tabular}{l| cc | cc | cc | cc | cc}
        \toprule
        Dataset & \multicolumn{2}{c|}{Baby} & \multicolumn{2}{c|}{Sports} & \multicolumn{2}{c|}{Clothing} & \multicolumn{2}{c|}{Electronics} & \multicolumn{2}{c}{Microlens}\\
        \midrule
        Metric & R@20 & N@20 & R@20 & N@20 & R@20 & N@20 & R@20 & N@20 & R@20 & N@20\\
        \midrule
        MF & 0.0575 & 0.0249 & 0.0653 & 0.0298 & 0.0303 & 0.0138 & 0.0367 & 0.0161 & 0.0959 & 0.0408 \\
        LightGCN & 0.0754 & 0.0328 & 0.0864 & 0.0387 & 0.0544 & 0.0243 & 0.0540 & 0.0250 & 0.1075 & 0.0467 \\
        SelfCF & 0.0822 & 0.0357 & 0.0955 & 0.0427 & 0.0616 & 0.0275 & 0.0653 & 0.0306  &  0.1125 &  0.0473 \\
        DirectAU & 0.0804 & 0.0367 & 0.1017 & 0.0464 & 0.0669 & 0.0298 & {0.0666} & {0.0315}  & 0.1186 & \underline{0.0524}\\
        \midrule
        VBPR & 0.0663 & 0.0284 & 0.0856 & 0.0384 & 0.0415 & 0.0192 & 0.0458 & 0.0202  & 0.1026 & 0.0441 \\
        MMGCN & 0.0660 & 0.0282 & 0.0636 & 0.0270 & 0.0361 & 0.0154 & 0.0331 & 0.0141  & 0.0701 & 0.0279 \\
        GRCN  & 0.0824 & 0.0358 & 0.0919 & 0.0413 & 0.0657 & 0.0284 & 0.0529 & 0.0241  & 0.1070 & 0.0460 \\
        LATTICE & 0.0850 & 0.0370 & 0.0953 & 0.0421 & 0.0733 & 0.0330 & OOM & OOM  & 0.1089 & 0.0473\\
        SLMRec & 0.0810 & 0.0357 & 0.1017 & 0.0462 & 0.0810 & 0.0357 & 0.0651 & 0.0303  & 0.1190 &  0.0511 \\
        BM3  & 0.0883 & 0.0383 & 0.0980 & 0.0438 & 0.0621 & 0.0281 & 0.0648 & 0.0302  & 0.0981 & 0.0400 \\
        FREEDOM  & 0.0992 & 0.0424 & {0.1089} & {0.0481} & \underline{0.0941} & {0.0420} & 0.0601 & 0.0273  & 0.1032 & 0.0437 \\
        LGMRec & {0.1002} & {0.0440} & 0.1068 & 0.0480 & 0.0828 & 0.0371 & 0.0625 & 0.0287  & 0.1132 & 0.0489  \\
        DA-MRS & {0.0966} & {0.0426} & {0.1078} & {0.0475} & {0.0924} & {0.0415} & OOM & OOM  & \underline{0.1196} & {0.0520} \\
        MIG-GT & \underline{0.1021} & \underline{0.0452} & \underline{0.1130} & \underline{0.0511} & {0.0934} & \underline{0.0422} & \underline{0.0696} & \underline{0.0320}  & {0.1189} & {0.0523} \\
        \midrule
        \textbf{\mmnm{}} & \textbf{0.1034} & \textbf{0.0470}* & \textbf{0.1222}* & \textbf{0.0567}* & \textbf{0.1006}* & \textbf{0.0463}* & \textbf{0.0760}* & \textbf{0.0359}*  & \textbf{0.1258}* & \textbf{0.0554}* \\
        \textbf{\textit{Imp.}} & {1.27\%} & {3.98\%} & {8.14\%} & {10.96\%} & {6.91\%} & {9.72\%} & {9.20\%} & {12.19\%}  & {5.18\%} & {5.73\%} \\
        \bottomrule
        \multicolumn{11}{l}{- `OOM' denotes an Out-Of-Memory condition encountered on a Tesla V100 GPU with 32 GB of memory. }
    \end{tabular}
    \label{tab:perform}
    \vspace{-10pt}
\end{table*}

\subsection{Evaluation Metrics and Scenarios} 
Following established methodologies~\cite{wang2022towards, zhang2021mining, zhou2023bootstrap}, we randomly partition each dataset into training, validation, and test sets at a ratio of 8:1:1. To assess algorithm performance in top-$k$ recommendation scenarios, we utilize standard evaluation metrics commonly employed in recommendation systems, namely Recall ($\mathrm{R}@k$) and Normalized Discounted Cumulative Gain (NDCG, shorted as $\mathrm{N}@k$). The parameter $k$ is set to 10 and 20. 
We employ two distinct data splitting strategies to evaluate our model under both general and cold-start conditions, following the protocols established by~\cite{zhang2021mining, zhou2023tale}.

\begin{table*}[bpt]
    \centering
    \def\arraystretch{0.9}	
    \setlength{\tabcolsep}{7.5pt}
    \caption{Performance comparison of different recommendation methods in terms of Recall@10 and NDCG@10. The best results are indicated in \textbf{bold} text, and the second-best results are \underline{underlined}. }
    \begin{tabular}{l| cc | cc | cc | cc | cc}
        \toprule
        Dataset & \multicolumn{2}{c|}{Baby} & \multicolumn{2}{c|}{Sports} & \multicolumn{2}{c|}{Clothing} & \multicolumn{2}{c|}{Electronics} & \multicolumn{2}{c}{Microlens}\\
        \midrule
        Metric & R@10 & N@10 & R@10 & N@10 & R@10 & N@10 & R@10 & N@10 & R@10 & N@10\\
        \midrule
        MF & 0.0357 & 0.0192 & 0.0432 & 0.0241 & 0.0206 & 0.0114 & 0.0235 & 0.0127 & 0.0624 & 0.0322 \\
        LightGCN & 0.0479 & 0.0257 & 0.0569 & 0.0311 & 0.0361 & 0.0197 & 0.0363 & 0.0204 & 0.0720 & 0.0376 \\
        SelfCF & 0.0521 & 0.0279 & 0.0630 & 0.0344 & 0.0415 & 0.0224 & 0.0442 & 0.0251  &  0.0723 &  0.0369 \\
        DirectAU & 0.0543 & 0.0300 & 0.0682 & 0.0379 & 0.0443 & 0.0240 & {0.0460} & \underline{0.0262}  & \underline{0.0817} & \underline{0.0429}\\
        \midrule
        VBPR & 0.0423 & 0.0223 & 0.0558 & 0.0307 & 0.0281 & 0.0158 & 0.0293 & 0.0159  & 0.0677 & 0.0351 \\
        MMGCN & 0.0421 & 0.0220 & 0.0401 & 0.0209 & 0.0227 & 0.0120 & 0.0207 & 0.0109  & 0.0421 & 0.0207 \\
        GRCN  & 0.0532 & 0.0282 & 0.0599 & 0.0330 & 0.0421 & 0.0224 & 0.0349 & 0.0194  & 0.0702 & 0.0365 \\
        LATTICE & 0.0547 & 0.0292 & 0.0620 & 0.0335 & 0.0492 & 0.0268 & OOM & OOM  & 0.0726 & 0.0380\\
        SLMRec & 0.0547 & 0.0285 & 0.0676 & 0.0374 & 0.0540 & 0.0285 & 0.0443 & 0.0249  & 0.0778 &  0.0405 \\
        BM3  & 0.0564 & 0.0301 & 0.0656 & 0.0355 & 0.0422 & 0.0231 & 0.0437 & 0.0247  & 0.0606 & 0.0304 \\
        FREEDOM  & 0.0627 & 0.0330 & {0.0717} & {0.0385} & {0.0629} & {0.0341} & 0.0396 & 0.0220  & 0.0674 & 0.0345 \\
        LGMRec & {0.0644} & {0.0349} & 0.0720 & 0.0390 & 0.0555 & 0.0302 & 0.0417 & 0.0233  & 0.0748 & 0.0390  \\
        DA-MRS & {0.0626} & {0.0339} & {0.0708} & {0.0379} & {0.0633} & {0.0342} & OOM & OOM  & {0.0801} & {0.0419} \\
        MIG-GT & \underline{0.0665} & \underline{0.0361} & \underline{0.0753} & \underline{0.0414} & \underline{0.0636} & \underline{0.0347} & \underline{0.0467} & {0.0261}  & {0.0806} & {0.0426} \\
        \midrule
        \textbf{\mmnm{}} & \textbf{0.0692}* & \textbf{0.0381}* & \textbf{0.0837}* & \textbf{0.0467}* & \textbf{0.0701}* & \textbf{0.0386}* & \textbf{0.0519}* & \textbf{0.0297}*  & \textbf{0.0852}* & \textbf{0.0450}* \\
        \textbf{\textit{Imp.}} & {4.06\%} & {5.54\%} & {11.16\%} & {12.80\%} & {10.22\%} & {11.24\%} & {11.13\%} & {13.36\%}  & {4.28\%} & {4.89\%} \\
        \bottomrule
        \multicolumn{11}{l}{- `OOM' denotes an Out-Of-Memory condition encountered on a Tesla V100 GPU with 32 GB of memory. }
    \end{tabular}
    \label{tab:perform10}
\end{table*}

\textit{\textbf{Warm-Start Evaluation}.}
For each user in the dataset, we implement a stratified random sampling approach to partition their historical interactions. The dataset is segregated into three mutually exclusive subsets: training, validation, and testing, with a ratio of 8:1:1, respectively. This methodology ensures: i). A minimum of five interactions per user in the processed dataset. ii). At least one sample for both validation and testing phases. iii). A minimum of three interactions for model training.

\textit{\textbf{Cold-Start Evaluation}.}
To simulate cold-start conditions, we adopt the following procedure: i). Random selection of 20\% of items from the complete item pool. ii). Equal bifurcation of the selected items into validation (10\%) and test (10\%) sets. iii). Assignment of user-item interactions to training, validation, or testing sets based on the item's designated partition. This approach ensures that items in the validation and test sets remain unseen during the training phase, accurately replicating the challenges inherent in cold-start scenarios where no prior information is available for a subset of items during the recommendation process.

\subsection{Implementation details}
We set the embedding dimension of $d$ as $d=64$ and utilize the Xavier initialization method~\cite{glorot2010understanding} for user embedding initialization. To minimize the proposed loss function, we optimize the model using the Adam optimizer~\cite{kingma2014adam} with a learning rate of $0.001$. For the baseline methods, we strictly adhere to the hyperparameter tuning procedures outlined in their respective original papers. Regarding our proposed method, we employ a grid search to identify the optimal combination of hyperparameters across all datasets. Specifically, we explore the trade-off $\gamma$ between alignment and uniformity loss within the range $[0.2, 3.0]$ with increments of $0.2$.
The model selection is based on the highest $\mathrm{R}@20$ score achieved on the validation data. The training process is limited to a maximum of 100 epochs, with early stopping implemented after 10 epochs. Our implementation is based on the MMRec framework~\cite{zhou2023mmrec}.

\section{Experiment Results}
\subsection{Performance Comparison}
\subsubsection{Warm-Start Evaluation of \mmnm{}}
Experimental results from different algorithms are presented in Table \ref{tab:perform} and Table~\ref{tab:perform10}, from which we observe the following phenomena. \textit{Firstly}, our proposed \mmnm{} achieves the best results in terms of Recall and NDCG across all datasets. 
Quantitatively, \mmnm{} achieved an average NDCG@20 improvement of 11.95\% over DA-MRS and 8.56\% over MIG-GT, respectively, when evaluated across all available datasets.
The consistent improvement over all baselines demonstrates the superiority of our \mmnm{}, even on the largest ``Electronics'' dataset. \textit{Secondly},the efficacy of incorporating multimodal information in recommendation models may be attenuated when applied to large-scale datasets.
For example, on ``Electronics'' dataset, almost all evaluated multimodal approaches except MIG-GT demonstrate inferior performance compared to DirectAU and SelfCF, highlighting notable limitations in their methodologies. This observation suggests that in larger datasets such as ``Electronics'', user-item interaction data assumes a more pivotal role in recommendation accuracy than in smaller datasets. 
The imposition of a uniformity loss between user-user and item-item pairs plays a crucial role in their differentiation. By leveraging this principle in conjunction with multimodal features, our proposed \mmnm{} framework demonstrates superior performance across all baselines, on the ``Electronics'' dataset. Specifically, \mmnm{} demonstrates a substantial improvement of 13.97\% in NDCG@20 compared to DirectAU on this dataset. This significant performance gain underscores the efficacy of our approach in integrating uniformity constraints with multimodal information for enhanced recommendation accuracy.

\subsubsection{Cold-Start Evaluation of \mmnm{}}
Multimodal recommendation models, by incorporating additional information beyond user-item interactions, mitigate the challenges posed by data sparsity. 
Fig.~\ref{fig:coldstart} illustrates the recommendation performance of our proposed \mmnm{} and three representative models.

\begin{figure}[bpt]
  \centering
  \includegraphics[width=\linewidth]{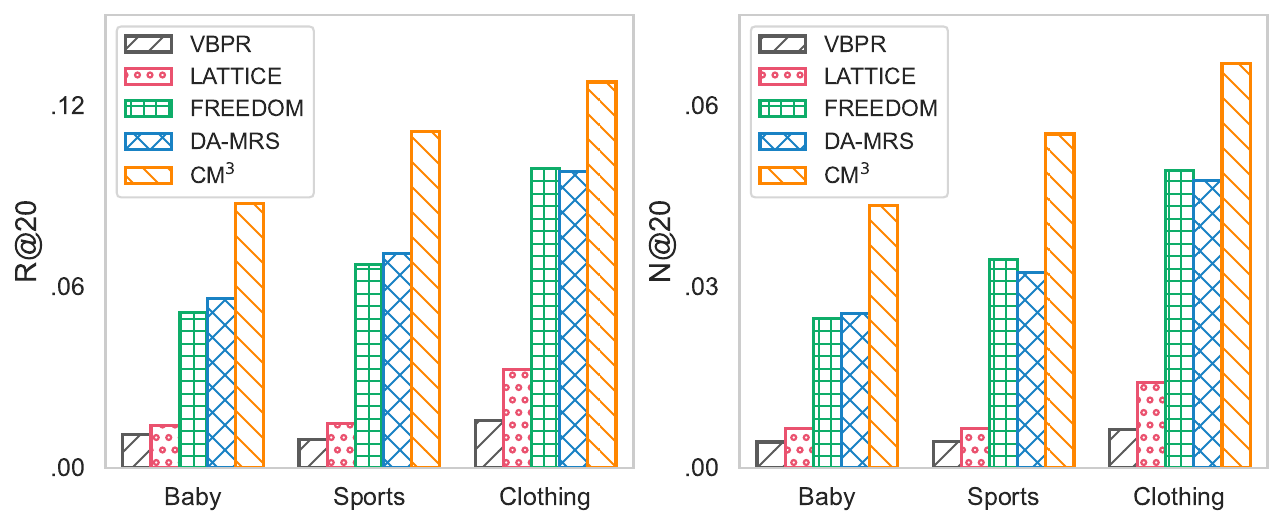}
  \vspace{-12pt}
  \caption[]{Performance of \mmnm{} compared with representative baselines under cold-start settings.}
  \label{fig:coldstart}
  \vspace{-14pt}
\end{figure}

The figure elucidates the subsequent facets:
\textit{Firstly}, incorporating multimodal features into the training loss function can enhance the robustness of recommendation models in cold-start scenarios. For example, VBPR concatenates multimodal features with item IDs for item representation learning. While LATTICE solely utilizes multimodal features to construct the item-item graph, VBPR achieves competitive performance on smaller datasets like ``Baby'' and ``Sports''. This suggests that directly incorporating multimodal features in the loss function might be beneficial.
\textit{Secondly}, GCNs have the potential to propagate information and gradients to unseen items (cold-start items) during training, even if they haven't been observed in user-item interactions. This can alleviate the cold-start problem, particularly when the user-item graph is large and sparsely connected. Partial validation for this can be observed on ``Clothing'' dataset in Fig.~\ref{fig:coldstart}.
\textit{Thirdly}, we observe that \mmnm{} significantly outperforms baseline models. In addition to the aforementioned advantages of \mmnm{}, we hypothesize that the calibrated uniformity loss plays a crucial role in adjusting the item distribution. This adjustment enables unseen items to receive a loss signal, thereby facilitating the learning of their representations.

\subsection{Ablation Study}
To gain a comprehensive understanding of \mmnm{}, we conduct ablation studies to investigate the impact of each component on recommendation performance. 
\begin{table}[bpt]
	\centering	
	\def\arraystretch{0.85}	
        \setlength{\tabcolsep}{3pt}
	\caption{Performance comparison of \mmnm{} variants under different component ablation settings.}
	\begin{tabular}{llcccc}
		\toprule
		{Dataset} & {Metric} & \mmnm{}\textsubscript{w/o F} & \mmnm{}\textsubscript{w LF} & \mmnm{}\textsubscript{w StdU} & \textbf{\mmnm{}} \\
		\midrule
		\multirow{2}{*}{Baby} & R@20 & 0.0956 & 0.0991 & 0.0890 & 0.1034\\
		& N@20 & 0.0446 & 0.0460 & 0.0414 & 0.0470 \\
		\midrule
		\multirow{2}{*}{Sports} & R@20 & 0.1164 & 0.1180 & 0.1120 & 0.1222 \\
		& N@20 & 0.0543 & 0.0539 &  0.0531 & 0.0567 \\
		\midrule
		\multirow{2}{*}{Clothing} & R@20 & 0.0981 & 0.0994 & 0.0993 & 0.1006 \\
		& N@20 & 0.0457 & 0.0458 & 0.0460 & 0.0463 \\
		\bottomrule
	\end{tabular}
	\vspace{-8pt}
	\label{tab:perform_com}	
\end{table}

\begin{table}[bpt]
	\centering
	\def\arraystretch{0.85}	
        \setlength{\tabcolsep}{3.2pt}
	\caption{Performance comparison of \mmnm{} variants under different unimodal/multimodal features.}
	\begin{tabular}{llcccc}
		\toprule
		{Dataset} & {Metric} & \mmnm{}\textsubscript{w/o V} & \mmnm{}\textsubscript{w/o T}& \textbf{\mmnm{}} & \textbf{\mmnm{}}\textsubscript{MLLM}  \\
		\midrule
		\multirow{2}{*}{Baby} & R@20 & 0.0847 & 0.0860 & 0.1034 & 0.1062\\
		& N@20 & 0.0378 & 0.0382 & 0.0470 & 0.0477 \\
		\midrule
		\multirow{2}{*}{Sports} & R@20 & 0.1035 & 0.1018 & 0.1222 & 0.1246 \\
		& N@20 & 0.0472 & 0.0465 & 0.0567 & 0.0576 \\
		\midrule
		\multirow{2}{*}{Clothing} & R@20 & 0.0725 & 0.0679 & 0.1006 & 0.1065\\
		& N@20 & 0.0336 & 0.0311 & 0.0463 & 0.0488\\
		\bottomrule
	\end{tabular}
	\vspace{-10pt}
	\label{tab:perform_mm}	
\end{table}

\subsubsection{Component Ablation} In this study, we explore the contributions of the spherical Bézier fusion and calibrated uniformity loss in comparison to the linear interpolated fusion and standard uniformity loss. We consider the following variants while fixing all other settings. 
\begin{itemize}[leftmargin=*]
    \item \textbf{\mmnm{}\textsubscript{w/o F}} indicates that we only remove the proposed fusion strategies during \mmnm{} training.  
    \item \textbf{\mmnm{}\textsubscript{w LF}} means that the multimodal features is fused with the conventional linear interpolation. 
    \item \textbf{\mmnm{}\textsubscript{w StdU}} represents that the calibrated uniformity loss is substituted by the standard uniformity loss. 
\end{itemize}
Table \ref{tab:perform_com} presents the experimental results of the aforementioned model variants across three datasets. Analysis of these results yields several noteworthy observations: i). The full version of our model consistently outperforms all ablation settings across every dataset and evaluation metric. This finding suggests that each component of the model contributes positively to its overall performance. 
ii). Each dataset reacts differently to the removal of model components. For example, on ``Baby'' and ``Sports'' datasets, removing the calibrated uniformity leads to a significant performance drop, whereas the drop on ``Clothing'' dataset is less pronounced. Linear fusion performs comparably to our method on  ``Clothing'' dataset, but shows inferior results on the other two datasets. iii). Comparative analysis between the full model and the variant without fusion reveals that the proposed spherical Bézier fusion serves as an effective default strategy for enhancing recommendation accuracy.

\begin{figure*}[bpt]
  \centering
  \includegraphics[width=0.95\linewidth]{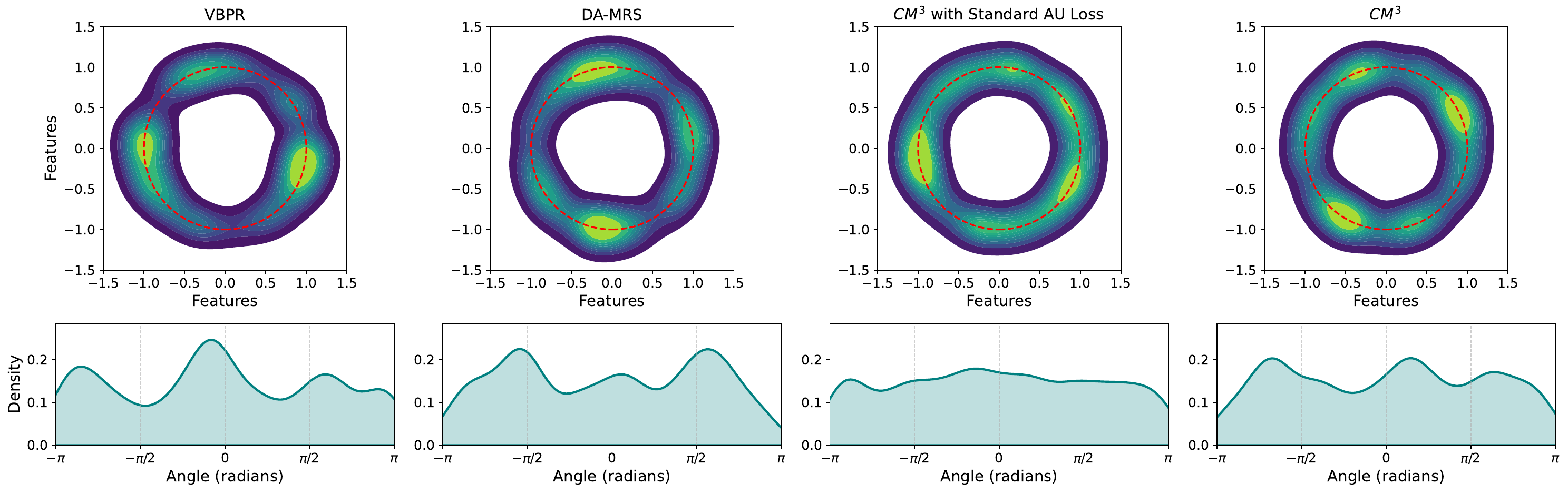}
  \caption[]{Distribution of item representations via KDE plot, with lighter areas indicating a higher concentration of points.}
  \label{fig:emb_kde}
  \vspace{-10pt}
\end{figure*}

\subsubsection{Multimodal Feature Ablation} In this study, we investigate the impact of unimodal features on recommendation performance. Specifically, we consider the following variants of \mmnm{}, either incorporating only unimodal features or utilizing features extracted with Multimodal Large Language Models (MLLMs).

Specifically, we leverage Meta's ``Llama-3.2-11B-Vision''~\cite{dubey2024llama} for converting visual content into textual captions. 
Text embeddings for items are subsequently generated from item captions using the ``e5-mistral-7b-instruct'' model~\cite{wang2024improving}. The resulting embedding vectors, each of dimension 4,096, are used to represent both image-derived and text-based item descriptions.

\begin{itemize}[leftmargin=*]
    \item \textbf{\mmnm{}\textsubscript{w/o V}} represents that \mmnm{} is trained without the visual features of items. 
    \item \textbf{\mmnm{}\textsubscript{w/o T}} denotes that \mmnm{} is trained without textual features. 
    \item \textbf{\mmnm{}\textsubscript{MLLM}} indicates that \mmnm{} is trained utilizing multimodal features derived from MLLMs.
\end{itemize}
Table \ref{tab:perform_mm} reports the recommendation accuracy of \mmnm{} and its two variants on three datasets. From experiment results, we observe that: i). Generally, \textbf{\mmnm{}\textsubscript{w/o V}}, which excluding the visual features clearly perform better than its counterpart \textbf{\mmnm{}\textsubscript{w/o T}} on ``Sports'' and ``Clothing'' datasets. This observation suggests that textual features are essential to ensure recommendation performance. ii). An exception is that \textbf{\mmnm{}\textsubscript{w/o T}} performs slightly better than \textbf{\mmnm{}\textsubscript{w/o V}} on  ``Baby'' dataset. We guess that product images in the Baby category may provide discriminative information for well modeling item representations. In summary, visual and textual features complement each other from different perspectives, allowing \mmnm{} to achieve the best results across all three datasets. 
It was further noted that \mmnm{}'s performance on the Clothing dataset was enhanced by 5.40\% in NDCG@20 via the use of features extracted from MLLMs.

\subsection{Item Representation Distribution}
To investigate how \mmnm{} enforces the distribution of item representations, we generated two plots in Fig.~\ref{fig:emb_kde} based on Sports dataset. The first shows feature distributions using Gaussian kernel density estimation (KDE) in $\mathbb{R}^2$, with lighter colors indicating a higher density of points. The second is a KDE plot of the angles, calculated as $\operatorname{arctan2}(y, x)$ for each point $(x, y) \in S^1$.
As depicted in the figure, the application of contrastive loss encourages a more uniform distribution among item representations (DA-MRS and \mmnm{} over VBPR). Notably, the proposed calibrated uniformity loss provides a fine-grained adjustment, mitigating the tendency towards excessive uniformity that can occur with standard uniformity loss.

\subsection{Hyperparameter Sensitivity Study}
\begin{figure}[bpt]
  \centering
  \includegraphics[width=0.95\linewidth, trim={10 10 10 10},clip]{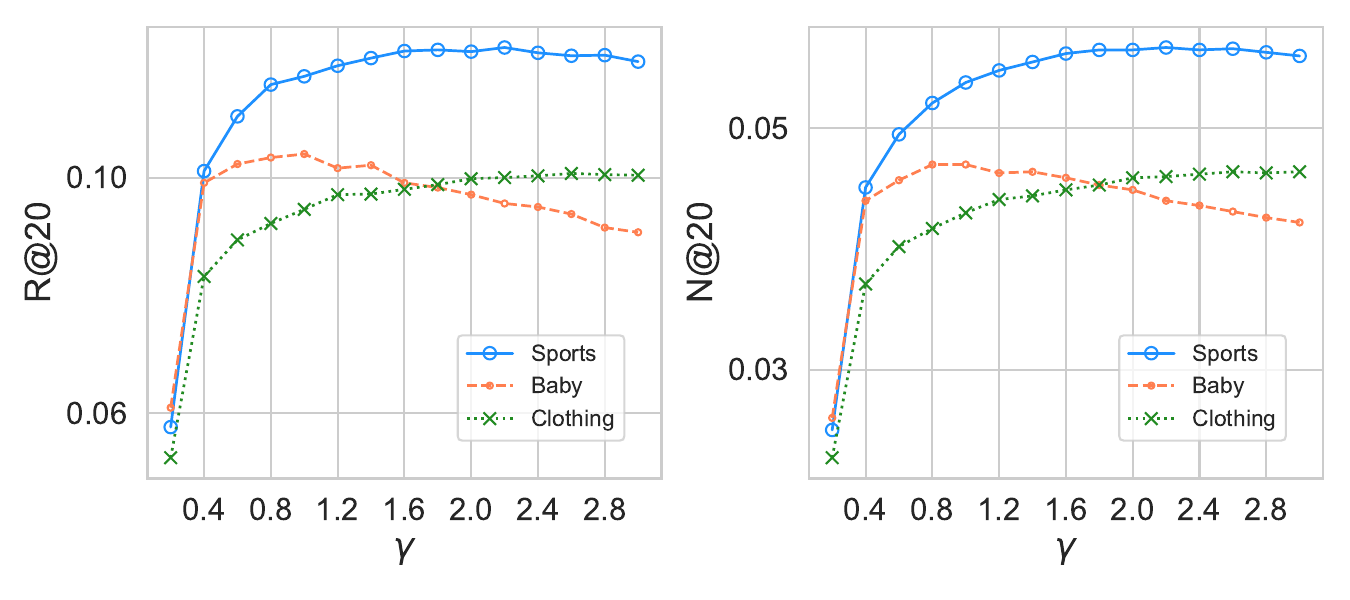}
  \caption[]{Performance analysis of \mmnm{} across varying alignment-uniformity trade-offs $\gamma$.}
  \label{fig:hyper}
  \vspace{-12pt}
\end{figure}

To investigate the influence of the trade-off factor $\gamma$ in the loss function, we conduct experiments to examine the sensitivity of \mmnm{} with respect to $\gamma$ across three datasets. From Fig.~\ref{fig:hyper}, we observe the following: i). As $\gamma$ increases from $0.4$ to $0.8$, the recommendation accuracy improves dramatically, suggesting that uniformity plays a crucial role in learning user and item representations. ii). 
The behavior of the datasets diverges notably when the parameter $\gamma$ exceeds 0.8.  
The metrics $R@20$ and $N@20$ maintain relatively stable and high values as $\gamma$ increases beyond 0.8. This phenomenon underscores the critical role of the uniformity loss in these larger datasets, suggesting that a higher degree of uniformity constraint continues to benefit model performance.
In contrast, our \mmnm{} model exhibits a gradual performance decline when $\gamma$ surpasses 0.8. 
This observation suggests that the optimal balance between alignment and uniformity for the smaller dataset is achieved at a lower $\gamma$ value. This aligns with previous research by~\cite{wang2022towards}, which demonstrated that excessive optimization of uniformity can be detrimental to recommendation performance.
Refer the Appendix for more details.

\section{Conclusion}
This study empirically elucidates the contrasting optimization dynamics of alignment and uniformity in contemporary multimodal recommender systems. We introduced a calibrated uniformity loss that incorporates inherent multimodal similarities, effectively refining the representational space and promoting a better affinity for similar items. Empirical evaluations across five datasets confirmed our model's superiority over existing baselines. Our findings highlight that uniformity plays a more pivotal role than alignment on large-scale datasets.
Furthermore, we demonstrate that calibrating item uniformity using multimodal features presents a viable approach to modulating the nuanced relations between items. This strategy effectively alleviates the inherent dilemma in current alignment and uniformity optimization paradigms.

\bibliographystyle{unsrt}
\bibliography{cm3}

\appendix

\section{Proof of Proposition 1}
\label{append:a}

Spherical Bézier fusion \cite{oh2024geodesic} augments data by mixing the visual representation $\vec{a}$ and textual representation $\vec{b}$ with Eq. (3).
In the following, we offer the formal proof. 
\begin{proof}
To determine whether $m_{\lambda}(\vec{a}, \vec{b})$ is a unit vector for arbitrary $\lambda$, we need to verify that the norm squared of $ m_{\lambda}(\vec{a}, \vec{b})$ is equal to 1, 
\begin{align*}
\| m_{\lambda}(\vec{a}, \vec{b}) \|^2 &= \frac{\left\| \vec{a} \sin(\lambda \theta) + \vec{b} \sin((1 - \lambda) \theta) \right\|^2}{\sin^2(\theta)}. \nonumber
\end{align*}

The numerator (denoted by $\mathcal{N}$) of the above equation can be expanded as,
\begin{align*}
\mathcal{N} &= \sin^2(\lambda \theta) \| \vec{a} \|^2 + 2 \sin(\lambda \theta) \sin((1 - \lambda) \theta) (\vec{a} \cdot \vec{b})\\ &+ \sin^2((1 - \lambda) \theta) \| \vec{b} \|^2 \\
&=\sin^2(\lambda \theta) + 2 \sin(\lambda \theta) \sin((1 - \lambda) \theta) \cos(\theta) + \sin^2((1 - \lambda) \theta),
\end{align*}
since $ \| \vec{a} \| = \| \vec{b} \| = 1 $ and $ \vec{a} \cdot \vec{b} = \cos(\theta)$. Using trigonometric identities, we further simplify the numerator $\mathcal{N}$ and obtain, 
\begin{align*}
\mathcal{N} &= \left[ \frac{1 - \cos(2 \lambda \theta)}{2} + \frac{1 - \cos(2(1 - \lambda) \theta)}{2} \right] \\
& + \left[ \cos((2 \lambda - 1) \theta) - \cos(\theta) \right] \cos(\theta) \\
&= 1 - \cos(\theta) \cos((2 \lambda - 1) \theta) + \cos(\theta) \cos((2 \lambda - 1) \theta) - \cos^2(\theta) \\
&= \sin^2(\theta).
\end{align*}
Therefore, $\| m_{\lambda}(\vec{a}, \vec{b}) \|^2 = \frac{\mathcal{N}}{\sin^2(\theta)} = \frac{\sin^2(\theta)}{\sin^2(\theta)} = 1.$
We conclude that $m_{\lambda}(\vec{a}, \vec{b})$ is a unit vector when $ \vec{a} $ and $ \vec{b} $ are unit vectors.
Consequently, we assert that its extended version with Eq. (2) in the main paper, which integrates all available modality features, also lies on the hypersphere, provided that each individual feature lies on the hypersphere.
Thus, Proposition 1 holds.
\end{proof}

\section{Learning Curves of Alignment and Uniformity in \mmnm{}} 
Fig.~\ref{fig:curve} illustrates the learning curves of alignment and uniformity losses in \mmnm{}. During the initial epochs, recommendation performance increases sharply as both the alignment loss and calibrated uniformity loss of items decrease. However, model performance fluctuates and even slightly declines on the Clothing dataset when the uniformity loss of items levels off, despite the continuous decline in alignment loss. This observation suggests that minimizing alignment loss alone, without calibrated uniformity loss, does not necessarily lead to better performance.

\begin{figure}[bpt]
  \centering
  \includegraphics[width=\linewidth]{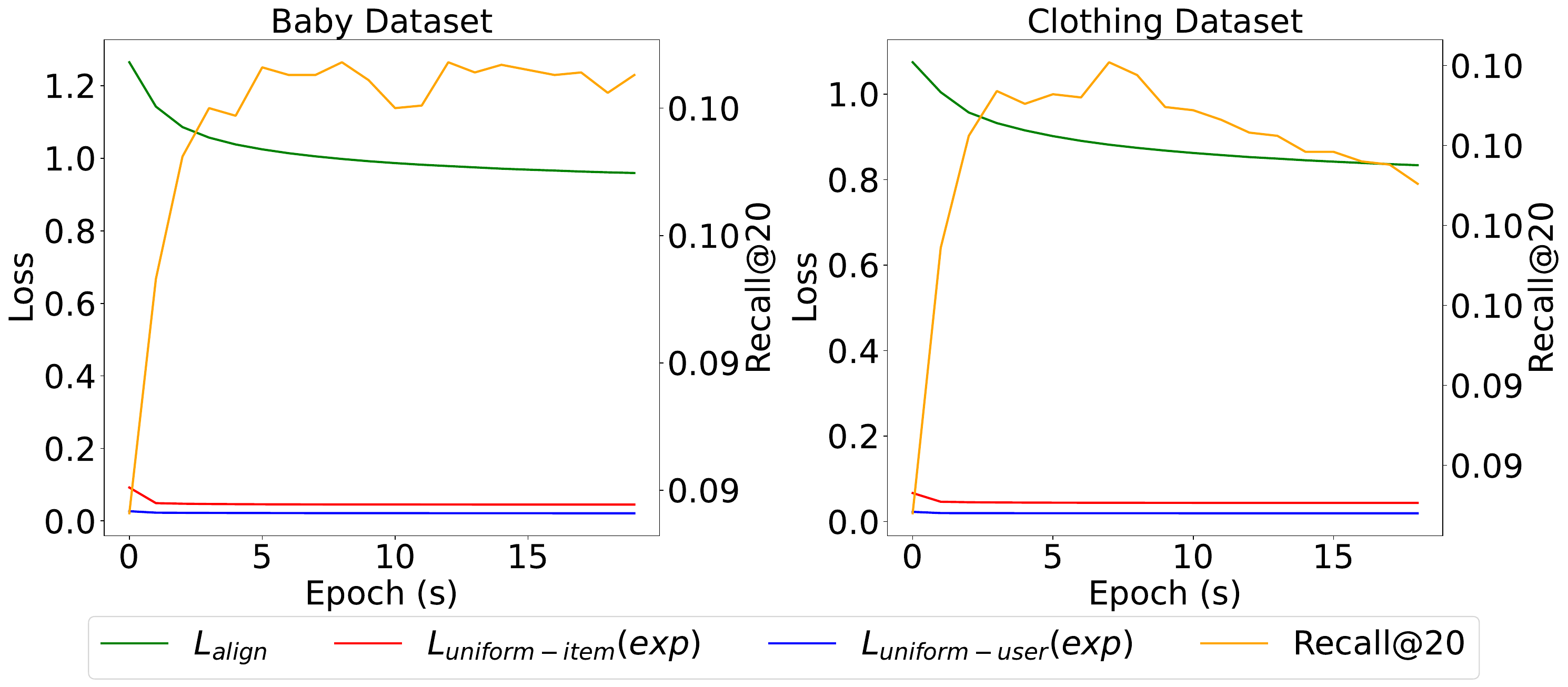}
  \caption[]{Learning curves of our \mmnm{}. Uniformity losses are close to $0$ due to the exponentiation of negative values.}
  \label{fig:curve}
  \vspace{-5pt}
\end{figure}

\section{Item Uniformity Assessment with Different Item Features} 
\label{append:random_uni}
To evaluate the impact of multimodal features on item uniformity, we utilized randomly generated features as a contrast. As demonstrated in Table~\ref{tab:mm_uni_random}, original features yielded inferior uniformity, necessitating greater optimization efforts, as visualized in Fig.1 of the main paper.
\begin{table}[h!]
	\centering	
        \setlength{\tabcolsep}{3pt}
	\caption{Item uniformity under various features.}
    \vspace{-6pt}
	\begin{tabular}{lcccccc}
		\toprule
		{Item Feature} & {VBPR} & BM3 & LGMRec & DA-MRS & FREEDOM \\
		\midrule
		{Multimodal} & -6.00 & -10.37 & -15.14 & -16.96 &  -17.42  \\
		\midrule
		{Random} & -8.19 & -10.46 & -17.08 &  -17.13 & -21.05 \\
		\bottomrule
	\end{tabular}
	\label{tab:mm_uni_random}	
\end{table}

\section{Concrete Runtime Comparison}

We further evaluate the concrete runtime performance of our proposed model against several representative models on the \textbf{LARGEST} available dataset (Electronics) from the multimodal recommendation literature. As shown in the following Table~\ref{tab:runtime_mm}, which reveals: (i) Multimodal models generally require greater memory and training time compared to non-multimodal counterparts; (ii) The memory usage and training time of our model are comparable to those of other multimodal models (\eg MMGCN, FREEDOM, GRCN), yet our model achieves superior recommendation performance.
Beyond its other benefits, the proposed model achieves quicker convergence than most multimodal models, leading to lower overall training expenses.
\begin{table}[h!]
	\centering	
	\caption{Runtime cost of recommender models.}
    \vspace{-6pt}
	\begin{tabular}{lccccc}
		\toprule
		& MMGCN & GRCN &  MIG-GT & \mmnm{}\\
		\midrule
		{Memory (G)} &14.54 & 17.38 & 16.85 &  11.32  \\
		\midrule
		{Time/Epoch ($sec.$)} & 470.15 & 152.68 &  34.19 & 202.12  \\
		{Convergent Epoch} & 22 & 151 &  351 & 26  \\
		{Train Time $\approx$ ($hour$)} & 2.87 & 6.40 &  3.33 & 1.46  \\
		\bottomrule
	\end{tabular}
	\label{tab:runtime_mm}	
\end{table}

\end{document}